\numberwithin{equation}{section}
\theoremstyle{plain}
\newtheorem{theorem}{Theorem}[section]
\newaliascnt{lemma}{theorem}
\newaliascnt{corollary}{theorem}
\theoremstyle{definition}
\newaliascnt{definition}{theorem}
\newaliascnt{example}{theorem}
\newtheorem{example}[example]{Example}
\newaliascnt{remark}{theorem}
\newtheorem{remark}[remark]{Remark}
\newcommand{\RR}{\mathbf{R}}
\renewcommand{\epsilon}{\varepsilon}
\newcommand{\abs}[1]{\left\lvert #1 \right\rvert}
\newcommand{\avg}[1]{\bigl\langle #1 \bigr\rangle}
\newcommand{\In}{\mathcal{L}_{\rho}}
\newcommand{\Inm}{\mathcal{L}_{m}}
\DeclareMathOperator{\sgn}{sgn}
\title{Isospectral flows for the inhomogeneous string density problem}
\author{Andrzej Z. G\'orski
\thanks{H. Niewodnicza\'nski Institute of Nuclear Physics, Polish Academy of Sciences,
Radzikowskiego 152, Krak\'ow, Poland; Andrzej.Gorski@ifj.edu.pl}
\and 
  Jacek Szmigielski
  \thanks{Department of Mathematics and Statistics, University of Saskatchewan, 106 Wiggins Road, Saskatoon, Saskatchewan, S7N 5E6, Canada; szmigiel@math.usask.ca}}
\date{\today}
\begin{document}
\maketitle
\begin{abstract} 
We derive isospectral flows of the mass density in the string boundary 
value problem corresponding to general boundary conditions.  In 
particular, we show that certain class of rational flows 
produces in a suitable limit all flows generated by polynomials in negative powers of the spectral parameter.  We illustrate the theory 
with concrete examples of isospectral flows of discrete 
mass densities which we prove to be Hamiltonian and for which we provide explicit solutions of equations of motion in terms of 
Stieltjes continued fractions and Hankel determinants.

\end{abstract} 

%\tableofcontents{}
\section{Introduction} 

The  1-D wave equation 
$\tfrac{1}{c^2} u_{\tau \tau}-u_{xx}=0$
is a basic classical equation describing propagation of waves, in particular 
vibrations of a string.  The coefficient 
$c^2$ has the physical dimension of velocity squared and is usually 
assumed to be independent of position.  In general $\frac{1}{c^2}$ is 
proportional to the lineal mass density $\rho$ with the inverse of the 
proportionality constant being the string tension which 
for convenience we will set equal to $1$.  Thus 
we can write the inhomogeneous string wave equation as: 
\begin{equation} \label{eq:1d wave inh}
\rho(x) u_{\tau \tau}-u_{xx}=0.  
\end{equation} 
The normal modes $v$ are derived by substituting $u=v(x)\cos{\omega t}$ 
into \eqref{eq:1d wave inh}, resulting in 
\begin{equation} \label{eq:prelimstring}
v_{xx}=-\omega^2 \rho(x) v\stackrel{def}{=}-z \rho(x) v.   
\end{equation} 
Furthermore, one needs to specify the length of the string; in this paper we will choose $0\leq x\leq 1$ so that the string will have a unit length, as well as  we impose some boundary 
conditions reflecting how the string is tied at the endpoints.  The most typical boundary conditions take the 
form $v_x(0)-hv(0)=0$ and $v_x(1)+Hv(1)=0$ where both 
$h$ and $H$ are non-negative or infinity.  In the latter case, taking 
the left boundary condition as an example, the 
interpretation of $h=\infty$ is simply $v(0)=0$, in other words 
$h=\infty$ corresponds to 
the Dirichlet condition at $x=0$.  Thus in this paper 
the inhomogeneous string equation will mean the boundary value problem 
\begin{subequations} 
\begin{equation} \label{eq:string}
v_{xx}=-z \rho(x) v,  \qquad 0<x<1, 
\end{equation} 
\begin{equation} \label{eq:BCstring} 
v_x(0)-hv(0)=0, \qquad v_x(1)+Hv(1)=0, 
\end{equation} 
\end{subequations} 
where $0\leq h, H\leq \infty$. 
Physically those $z$ for which the boundary value problem has a solution 
represent the squares of frequencies and, clearly, they are expected to be 
positive, or zero.   The literature on the inhomogeneous string problem 
includes 
\cite{barcilon83, gantmacher-krein, kackrein, McKean, colville-gomez-sz, krein-string}
and recent contributions \cite{Amore, Amore-further} dealing with 
a perturbative approach to computations of densities close to the 
homogenous one.  Models leading, directly or indirectly, to 
inhomogeneous strings appear in many areas of science, from geophysics 
(see e.g. \cite{hss})  and fluid dynamics (see e.g. \cite{BSS-Stieltjes}) to particle physics (see e.g. \cite{DHoker}).  
\section{Isospectral deformations}
The question of finding \textit{isospectral}, that is leaving the 
spectrum invariant, deformations of the mass density $\rho$ appearing 
in the string equation \eqref{eq:string} was proposed in a sequel 
of interesting papers by P. Sabatier in \cite{Sabatier-Constants, Sabatier-Evolution, Sabatier-NEEDS}.  In simple terms one is looking for 
evolution equations on $\rho$ with respect to the deformation 
parameter $t$ (not to be confused with physical time) for which 
the spectrum of the boundary value problem \eqref{eq:string} remains the same.  Historically, this problem goes back at least to G. Borg \cite{Borg}
who was perhaps the first to systematically study
the question of determining the potential  in the Sturm-Liouville problem from the knowledge of eigenvalues; 
in its canonical form, called the Liouville normal form, the underlying equation is equivalent to the 1-D Schr\"odinger equation $$
-\psi_{xx}+u(x)\psi=E \psi, $$
and the main message of \cite{Borg} was that two spectra are needed to determine uniquely $u$.  
In other words, by knowing a single spectrum, we can only 
hope to determine a family of potentials, and that family will be 
isospectral relative to that chosen spectrum.  The situation for 
the string equation \eqref{eq:string} is analogous, even though 
the string boundary value problem cannot be put in the canonical form mentioned above for 
general mass densities $\rho$.  We  will refer to the 
resulting family as \textit{isospectral strings}.  An additional 
motivation for studying this problem comes from the theory of completely integrable nonlinear partial differential equations, especially the Camassa-Holm (CH) equation \cite{CH}
\begin{equation} \label{eq:CH} 
u_t-u_{xxt}+3uu_x-2u_xu_{xx}-uu_{xxx}=0,  
\end{equation} 
modelling wave propagation in shallow water.  The rationale for this 
connection comes from the result in \cite{BSS-Stieltjes} 
that the CH can be viewed as an isospectral deformation of the string equation with 
Dirichlet boundary conditions (see also reviews \cite{bss-string, EKT}).  
\subsection{Summary of the main results of \cite{colville-gomez-sz}}
This section contains a summary of the approach to isospectral strings 
taken in \cite{colville-gomez-sz} based on ideas originating in the theory of integrable systems.  
The starting point is to postulate the "time" deformation of \eqref{eq:string}
to be of the form
\begin{equation} \label{eq:vtdeform}
v_t=av+bv_x.  
\end{equation} 
Then a simple computation that amounts to checking $v_{xxt}=v_{txx}$ 
yields 
\begin{subequations}
\begin{align}
z\rho_t=\tfrac12 b_{xxx}+z\In b, \label{eq:rhotdeform}\\
a=-\tfrac12 b_x+\beta, \label{eq:ab}
\end{align} 
\end{subequations} 
where $\In=\rho D_x+D_x \rho$ and $\beta$ is a constant in $x$, 
which in principle depends on $t$ and also $z$.  
If, for example, $b$ is regular at $z=\infty$ and we write $\displaystyle{b=\sum_{0\leq j } \frac{b_{-j}}{ z^j}}$ then \eqref{eq:rhotdeform} reads 
\begin{subequations} 
\begin{align}
    \rho_{t}&=\In b_0, \label{eq:CGSalg}\\
    0&=\frac12 b_{-j,xxx}+\In b_{-(j+1)},  \quad 0 \leq j.  \label{eq:CGSalgcons}
\end{align} 
\end{subequations}
If we assume the shape of $b$ to be
\begin{equation}\label{eq:rationalD}
    b=b_0+\frac{b_{-1}}
    {z+\epsilon}, \qquad  
    \epsilon>0, 
\end{equation} 
then the deformation equation \eqref{eq:rhotdeform} reads
\begin{subequations} 
\begin{align}
    \rho_{t}&=\In b_0, \label{eq:CGSratt}\\
    b_{0,xxx}+\frac{b_{-1,xxx}}{\epsilon}=0, &\qquad 
    \tfrac{1}{2}b_{-1,xxx}-\epsilon \In b_{-1}=0.  \label{eq:CGSratcons}
\end{align} 
\end{subequations}

We note that equations \eqref{eq:CGSalgcons} and \eqref{eq:CGSratcons} do not involve 
any derivatives with respect to $t$ and thus can be viewed 
as representing \textit{ constraints}.  Resolving these constraints, that is 
finding a manageable parametrization of $b_j$ in terms of 
$\rho$, is  one of the essential intermediate steps in solving 
\eqref{eq:CGSalg}(or \eqref{eq:CGSratt}).  
This problem was completely solved in \cite{colville-gomez-sz} for the rational model given 
by \eqref{eq:rationalD}.

\begin{theorem}[\cite{colville-gomez-sz}] 
Let $G_{\epsilon}(x,y)$ be the Green's function 
for the boundary value problem 
\begin{equation} \label{eq:GBVP} 
D_x^2 f=\epsilon \rho f, \qquad  f'(0)-hf(0)=0, \qquad f'(1)+Hf(1)=0, \quad h\geq 0, H\geq 0
\end{equation} 
Suppose that $(h, H)\neq (0,0)$ then $b_{-1}$ and $b_0$ defined by setting 
\begin{equation}
b_{-1}(x)=G_{\epsilon}(x,x), \qquad b_{0}=\frac{\big[G_{\epsilon=0}(x,x)-G_{\epsilon}(x,x)\big]}{\epsilon}
\end{equation}
satisfy \eqref{eq:CGSratcons} and the resulting deformation 
given by \eqref{eq:CGSratt} is isospectral.  
\end{theorem}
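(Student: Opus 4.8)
The plan is to dispatch the three claims in turn, reducing each to an elementary property of the diagonal of the Green's function. Write $G_\epsilon(x,y)=\phi(x_<)\psi(x_>)/W$, where $\phi$ solves $f''=\epsilon\rho f$ with the left condition $\phi'(0)=h\phi(0)$, $\psi$ solves the same equation with the right condition $\psi'(1)=-H\psi(1)$, and $W=\phi\psi'-\phi'\psi$ is the constant Wronskian, so that $b_{-1}(x)=G_\epsilon(x,x)=\phi(x)\psi(x)/W$. The second constraint in \eqref{eq:CGSratcons} is then immediate from the observation that $b_{-1}$ is, up to the constant $W$, a product $w=\phi\psi$ of two solutions of $f''=\epsilon\rho f$: differentiating gives $w''=2\epsilon\rho w+2\phi'\psi'$ and then $w'''=2\epsilon\rho' w+4\epsilon\rho w'=2\epsilon\In w$, using $\In w=2\rho w'+\rho'w$. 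Dividing by $W$ yields $b_{-1,xxx}=2\epsilon\In b_{-1}$, which is exactly $\tfrac12 b_{-1,xxx}=\epsilon\In b_{-1}$.

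For the first constraint I would use $\epsilon b_0=G_{\epsilon=0}(x,x)-b_{-1}$, so that $\epsilon b_{0,xxx}=[G_{\epsilon=0}(x,x)]_{xxx}-b_{-1,xxx}$ and it suffices to prove $[G_{\epsilon=0}(x,x)]_{xxx}=0$. At $\epsilon=0$ the defining equation reduces to $f''=0$, whose solutions are affine; hence $G_{\epsilon=0}(x,x)$ is a product of two affine functions divided by a constant Wronskian, i.e. a polynomial of degree at most two in $x$, and its third derivative vanishes. This is precisely where the hypothesis $(h,H)\neq(0,0)$ is used: it guarantees that $f''=0$ with the boundary conditions \eqref{eq:BCstring} admits only the trivial solution, so the affine solutions are independent, the Wronskian $W_0\neq 0$, and $G_{\epsilon=0}$ --- hence $b_0$ --- is well defined.

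For isospectrality I would argue by the usual zero-curvature mechanism. By the computation producing \eqref{eq:rhotdeform}--\eqref{eq:ab}, once $\rho$ evolves by \eqref{eq:CGSratt} with \eqref{eq:CGSratcons} in force, the flow $v_t=av+bv_x$ with $a=-\tfrac12 b_x+\beta$ carries solutions of $v_{xx}=-z\rho v$ at fixed $z$ to solutions of the same equation; evolving each eigenfunction this way with its eigenvalue frozen therefore preserves the interior equation. It then remains only to check that $v_t=av+bv_x$ preserves the boundary conditions \eqref{eq:BCstring}, for then each eigenpair persists and the spectrum cannot change. Writing $\tilde v=av+bv_x$ and differentiating $v_x(0)-hv(0)$ along the flow, a short calculation using $v''=-z\rho v$ and $a_x=-\tfrac12 b_{xx}$ expresses the left boundary defect as $\big(\Lambda[b]-z\rho(0)b(0)\big)\,v(0)$ plus a multiple of the defect itself, where $\Lambda[f]:=-\tfrac12 f''(0)+hf'(0)-h^2 f(0)$.

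The boundary verification is the step I expect to be delicate. The clean way to see it is to use, as holds for the interior --- in particular discrete --- densities treated here, that the mass density does not charge the endpoints, i.e. $\rho(0)=\rho(1)=0$. Then $\phi''(0)=\epsilon\rho(0)\phi(0)=0$, and evaluating $\phi\psi/W$ at $x=0$ with $\phi'(0)=h\phi(0)$ gives $\Lambda[b_{-1}]=0$; the same computation at $\epsilon=0$ gives $\Lambda[G_{\epsilon=0}(\cdot,\cdot)]=0$, hence $\Lambda[b_0]=0$ and $\Lambda[b]=0$. Simultaneously the term $z\rho(0)b(0)$ vanishes, so the bracketed quantity $\Lambda[b]-z\rho(0)b(0)$ is zero; the left boundary defect then obeys a homogeneous linear ODE in $t$ with vanishing initial data and stays zero, and the symmetric computation at $x=1$ (with $h\mapsto -H$) settles the right endpoint. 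The main obstacle I foresee is thus twofold: isolating $\rho(0)=\rho(1)=0$ as the property that makes the $z$-dependent boundary contribution disappear, and interpreting $b_{-1,xxx}$ together with these endpoint evaluations in the piecewise-smooth sense when $\rho$ is a discrete measure.
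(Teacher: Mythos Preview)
The paper does not give its own proof of this theorem; it is quoted from \cite{colville-gomez-sz}. So there is no in-paper argument to compare against directly, only the later computations (Theorems~3.2--3.3) which reuse the same ingredients.

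Your verification of the constraints \eqref{eq:CGSratcons} is correct and is precisely the approach of the cited reference: the identity $w'''=2\epsilon\In w$ for a product $w=\phi\psi$ of two solutions of $f''=\epsilon\rho f$ is exactly what the paper later calls ``Lemma~5.1 in \cite{colville-gomez-sz}'' when it treats the CH analogue. Your handling of the first constraint via $[G_{\epsilon=0}(x,x)]_{xxx}=0$ and the role of $(h,H)\neq(0,0)$ in making $W_0\neq 0$ is also right.

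For isospectrality, your boundary computation is accurate, and the limitation you flag is real. A sharper version of your own calculation gives $\Lambda[b_{-1}]=-\epsilon\rho(0)\,b_{-1}(0)$ (not zero), and tracking everything through yields
\[
\Lambda[b]-z\rho(0)b(0)=-z\rho(0)\,b_0(0),
\]
which vanishes only when $\rho(0)=0$. So your argument, as written, establishes isospectrality only for densities not charging the endpoints---in particular for the discrete strings that dominate the paper. This is consistent with the paper's own later use of the boundary identity $\tfrac12 b_{xx}(1)+Hb_x(1)+H^2 b(1)=0$ in the proof of Theorem~3.2, which is stated there in the discrete-string setting. In short: your proposal is correct for the class of densities the paper actually works with, and you have correctly isolated the obstruction to extending it verbatim to arbitrary $\rho$; for the fully general statement one must go back to \cite{colville-gomez-sz}.
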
 

\begin{remark} Observe that all $b_j$ are \textit{a priori} defined only up to a quadratic 
polynomial in $x$ (see \eqref{eq:CGSratcons}).  Moreover, to satisfy the boundary conditions, 
the quadratic polynomial has to be proportional to the diagonal part,  
obtained by setting $x=y$, of the Green's function 
$G_{\epsilon=0}(x,y)$. 
\end{remark} 

It is not difficult to see that the limit $\epsilon\rightarrow 0^+$ exists 
and one obtains the following counterpart of the previous 
theorem, again proven in its entirety in \cite{colville-gomez-sz}.  
\begin{theorem} [\cite{colville-gomez-sz} ]
Let $G_0(x,y)=G_{\epsilon=0}(x,y)$ be be the Green's function 
of the operator $D_x^2$ satisfying 
\begin{equation} \label{eq:GBVP0} 
 G_{0,x}(0,y)-hG_{0}(0,y)=0, \qquad 
 G_{0,x}(1,y)+HG_{0}(1,y)=0,  \quad h\geq 0, H\geq 0
\end{equation} 
Suppose $(h, H)\neq (0,0)\, , 0\leq j$,   and define $b_{-j}$ by setting 
\begin{equation}
b_{-j}=0, \text{ for } 1<j , \qquad b_{-1}(x)=G_{0}(x,x), \qquad b_{0}=-G_{1}(x,x)\stackrel{def}{=}-\frac{dG_\epsilon(x,x)}{d\epsilon}\big|_{\epsilon=0}.  
\end{equation}
Then the $b_j$ satisfy \eqref{eq:CGSalgcons} and the resulting deformation 
given by \eqref{eq:CGSalg} is isospectral.  
\end{theorem}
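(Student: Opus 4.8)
The plan is to obtain the statement as the $\epsilon\to 0^+$ limit of the rational theorem just proved, with the bilinear (product) structure of the diagonal Green's function as the computational engine. Write $b_{-1}^{\epsilon}(x)=G_\epsilon(x,x)$ and $b_0^{\epsilon}(x)=\epsilon^{-1}\big[G_0(x,x)-G_\epsilon(x,x)\big]$ for the data supplied by the first theorem, which satisfy \eqref{eq:CGSratcons} and generate an isospectral flow for every $\epsilon>0$. First I would record the two limits $b_{-1}^{\epsilon}\to G_0(x,x)=:b_{-1}$ and $b_0^{\epsilon}\to -G_1(x,x)=:b_0$, the second being exactly a difference quotient converging to $-\frac{dG_\epsilon(x,x)}{d\epsilon}\big|_{\epsilon=0}$. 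Expanding $b_0^{\epsilon}+b_{-1}^{\epsilon}/(z+\epsilon)$ in powers of $1/z$ shows the coefficient of $z^{-j}$ equals $(-\epsilon)^{j-1}b_{-1}^{\epsilon}$ for $j\geq 1$, so every coefficient with $j\geq 2$ vanishes as $\epsilon\to 0^+$, matching $b_{-j}=0$.

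The computational core is the identity
\[
  \partial_x^3\big[G_\epsilon(x,x)\big]=2\epsilon\,\In\big[G_\epsilon(x,x)\big],
\]
valid for every $\epsilon$. It holds because, off the diagonal, $G_\epsilon(x,y)$ is a product of a left and a right solution of $f''=\epsilon\rho f$, so its diagonal restriction $g=G_\epsilon(x,x)$ is, up to the constant Wronskian, a product $g=f_Lf_R$ of two solutions of one and the same equation; differentiating three times and eliminating $f_L'',f_R''$ through the ODE produces exactly $2\epsilon(2\rho g'+\rho' g)=2\epsilon\,\In g$. Substituting this into \eqref{eq:CGSratcons} turns the first rational constraint into $\tfrac12 b_{-1,xxx}^{\epsilon}=\epsilon\,\In b_{-1}^{\epsilon}$ and the second into $\tfrac12 b_{0,xxx}^{\epsilon}+\In b_{-1}^{\epsilon}=0$. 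Letting $\epsilon\to 0^+$ in the first gives $b_{-1,xxx}=0$ — equivalently, $G_0(x,x)$ is quadratic because the $\epsilon=0$ solutions are affine — which is the $j=1$ case of \eqref{eq:CGSalgcons} together with its partner $b_{-2}=0$; letting $\epsilon\to 0^+$ in the second gives $\tfrac12 b_{0,xxx}+\In b_{-1}=0$, the $j=0$ case. The remaining cases $j\geq 2$ hold trivially, as both $b_{-j}$ and $b_{-(j+1)}$ vanish. One may in fact bypass the limit altogether: differentiating the displayed identity in $\epsilon$ at $\epsilon=0$ yields $\partial_x^3\big[G_1(x,x)\big]=2\,\In\big[G_0(x,x)\big]$ directly, which is precisely $\tfrac12 b_{0,xxx}+\In b_{-1}=0$.

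For isospectrality I would inherit it from the first theorem by continuity: the flow \eqref{eq:CGSratt} is isospectral for each $\epsilon>0$, its right-hand side $\In b_0^{\epsilon}$ converges to $\In b_0$, and the eigenvalues of \eqref{eq:string}--\eqref{eq:BCstring} depend continuously on $\rho$, so they remain $t$-independent along the limiting flow \eqref{eq:CGSalg}. Alternatively one re-derives isospectrality intrinsically: once \eqref{eq:CGSalgcons} holds, the compatibility computation behind \eqref{eq:rhotdeform}--\eqref{eq:ab} shows eigenfunctions can be propagated at fixed $z$, and the boundary conditions \eqref{eq:BCstring} are preserved precisely because $b_{-1}$ and $b_0$ are built from the Green's function satisfying \eqref{eq:GBVP0}; this is the content of the Remark, and it is what pins the otherwise free quadratic ambiguity in the $b_j$ to the diagonal of $G_0$.

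The main obstacle is the exchange of the $\epsilon\to 0^+$ limit with the three $x$-derivatives in $b_{0,xxx}$, which requires that $\epsilon\mapsto G_\epsilon(x,y)$ be real-analytic near $\epsilon=0$, uniformly on compact $x$-sets. The product identity is exactly what tames this, since it rewrites the third $x$-derivative as $\In$ applied to $G_\epsilon(x,x)$ itself, so that only the (benign) $\epsilon$-dependence of $G_\epsilon(x,x)$ must be differentiated. A secondary point is regularity: for smooth $\rho$ all derivatives are classical, but for the discrete densities treated later $G_\epsilon(x,x)$ is only piecewise quadratic, and one must read $\partial_x^3$ and $\In$ in $\spaceDprime$ and check that the product identity survives in that sense.
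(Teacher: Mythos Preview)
Your proposal is correct and follows essentially the same approach as the paper: the paper states only that ``the limit $\epsilon\to 0^+$ exists'' and refers to \cite{colville-gomez-sz} for the full proof, while your argument spells out precisely this limit, using the product identity $\partial_x^3 G_\epsilon(x,x)=2\epsilon\,\In G_\epsilon(x,x)$ (which is exactly the last constraint in \eqref{eq:CGSratcons}) and its $\epsilon$-derivative --- the very device the paper employs in Section~3 for general $k$. The only slip is a harmless swap of the labels ``first'' and ``second'' when referring to the two equations in \eqref{eq:CGSratcons}.
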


\section{Deformations with finitely many fields}

The most \textit{natural} types of flows involve only finitely many fields $b_j$;   
one can formally obtain them by truncating the infinite tower of constraints at certain level $j=k$ by requiring that 
$b_{-j}=0, k<j $.  
Thus the constraints take the form
\begin{equation} \label{eq:ktrunc} 
\begin{split}
0&=\tfrac12 b_{-j, xxx}+\In b_{-(j+1)}, \qquad 0\leq j\leq k-1, \\
0&=\tfrac12 b_{-k, xxx}, \text{ and } b_{-j}=0, \qquad k<j.  
\end{split}
\end{equation}
The other finite type, generalizing the rational case $k=1$ above, can be 
taken to be 
\begin{equation}\label{eq:krat} 
b=b_0+\frac{b_{-1}}{z+\epsilon}+\frac{b_{-2}}{(z+\epsilon)^2}+\cdots+\frac{b_{-k}}{(z+\epsilon)^k}. 
\end{equation} 

 The main objective of this paper is to show that, firstly, there exists a 
 natural parametrization of the case \eqref{eq:krat} in terms of 
 the same Green's function $G_{\epsilon}$ used for $k=1$ and, secondly, that this 
 parametrization has a nontrivial limit, parametrizing the truncated case 
 \eqref{eq:ktrunc}.  
 
 To begin with, by direct computation, we get the evolution equation and the constraints for $b$ given by \eqref{eq:krat} to be
 \begin{equation}\label{eq:kratcons} 
 \begin{split} 
 \rho_t&=\In b_0, \\
 0&=\tfrac12 b_{0,xxx}+\In b_{-1}, \\
 0&=\tfrac12b_{-j, xxx} +\In b_{-(j+1)}-\epsilon \In b_{-j}, \qquad 1\leq j\leq k-1\\
 0&=\tfrac12b_{-k, xxx}-\epsilon \In b_{-k}. 
 \end{split} 
 \end{equation} 
 We will now argue that there exists a parametrization 
 of these equations in terms of the Green's function $G_{\epsilon}$ and 
 its $\epsilon$ derivatives $G_{\epsilon}^{(j)}\stackrel{def}{=} \frac{d^j G_{\epsilon}}{d \epsilon^j}$.  The proposed parametrization is given 
 by the formulas
  \begin{equation}\label{eq:kratconspar} 
 \begin{split} 
 b_0&=\frac{\big[G_{0}(x,x)-\sum_{j=0}^{k-1} \frac{G_{\epsilon}^{(j)}(x,x)}{j!}(-\epsilon)^j\big]}{\epsilon^k},  \\
 b_{-j}&=\frac{(-1)^{(k-j)} G_{\epsilon}^{(k-j)}(x,x)}{(k-j)!}, \qquad \qquad 1\leq j\leq k. 
 \end{split} 
 \end{equation} 
The main ingredient of the proof rests on the observation 
that the $\epsilon$ derivative of the last equation in \eqref{eq:kratcons} 
is, up to a correct choice of the sign, the previous equation on the list.  Indeed, differentiating once we obtain
\begin{equation} 
0=\tfrac12b_{-k, xxx}^{(1)}-\In b_{-k}-\epsilon \In b_{-k}^{(1)}, 
 \end{equation} 
and by iterating we arrive at the general formula \eqref{eq:kratconspar}.  
The formula for $b_0$ follows by successive elimination of 
terms $\In b_j$ in terms of third spacial derivatives.  The intermediate 
formula 
\begin{equation*} 
\In b_{-(k-j)}=\tfrac12 \frac{\big[b_{-k}+\epsilon b_{-(k-1)}+\cdots+ \epsilon ^j 
b_{-(k-j)}\big]_{xxx}}{\epsilon^{j+1}}
\end{equation*}
is then substituted into the first equation of \eqref{eq:kratcons} with an 
important proviso that the term $G_{0}$, which we recall is 
at most quadratic in $x$, is added to ensure the 
existence of the limit $\epsilon \rightarrow 0^+$.  Finally, the reason why 
the limit exists is that $G_{\epsilon}$ is an analytic function of $\epsilon$ 
for $\epsilon$ small enough (thanks to the assumption $(h,H)\neq (0,0)$), while the limit for $b_0$ is 
ensured by observing that the numerator in $\frac{\sum_{j=0}^{k-1} \frac{G_{\epsilon}^{(j)}(x,x)}{j!}(-\epsilon)^j}{\epsilon^k}$ is the Taylor expansion 
of $G_{0=\epsilon -\epsilon}$ about $\epsilon$ and thus it is 
equal to $G_{0}$ with an error term $\mathcal{O}(\epsilon^k)$.  
 In the limit $\epsilon \rightarrow 0^+$ 
\begin{equation*} 
b_0=(-1)^k \frac{G_{\epsilon}^{(k)}}{k!}\big|_{\epsilon=0}.  
\end{equation*}
Before we discuss the formulas for $\epsilon=0$ 
we will simplify our notation to facilitate the display of formulas.  
We will write 
\begin{equation}\label{eq:Giter}
G_{\epsilon}=G_0+G_1 \epsilon+G_2\epsilon^2+\cdots
\end{equation}
Since $G_{\epsilon}$ is the Green's function of $D_x^2 -\epsilon \rho$ 
the terms $G_j$ satisfy 
\begin{equation} 
D_x^2 G_0(x,y)=\delta (x-y), \quad D_x^2 G_{j+1}(x,y)=\rho(x) G_j(x,y), \quad 0\leq j, 
\end{equation} 
all $G_j$ subject to the boundary conditions $G_{j, x}(x=0,y)-h G_j(x=0, y)=0$, \newline $ 
~G_{j, x}(x=1,y)~+H G_j(x=1, y)=0$.  
This allows one to write an explicit formula for 
$G_j(x,y)$, 
namely, 
\begin{equation}\label{eq:Gj}
G_j(x,y)=\int \limits_{[0,1]^j} G_0(x,\xi_j)\rho(\xi_j)G_0(\xi_j, \xi_{j-1})\rho(\xi_{j-1})\cdots     \rho(\xi_1)   G_0(\xi_1,y)d\xi_j\cdots d\xi_1
\end{equation}

\begin{remark}\label{rem:Gpositivity} 
It is easy to check that $G_0(x,y)<0$ on $[0,1]$.  Since in \eqref{eq:Gj} there are 
$j+1$ factors involving $G_0$ and the remaining factors are positive we get that 
$(-1)^{j+1} G_j(x,y)>0$.  
\end{remark} 
With this notation in place we conclude that the parametrization 
\begin{equation} \label{eq:bjGpara}
b_{-j}(x)=(-1)^{k-j} G_{k-j}(x,x), \qquad 0\leq j\leq k,  
\end{equation} 
resolves the constraints \eqref{eq:ktrunc}, and in addition the $b_{-j}$s so 
defined satisfy the correct boundary conditions ensuring isospectrality \cite{colville-gomez-sz}.  
These formulae have a natural diagrammatic representation 
as illustrated in \autoref{fig:Fdiagram}, where we 
present a natural interpretation of $b_0(x)$ for the case $k=2$.  

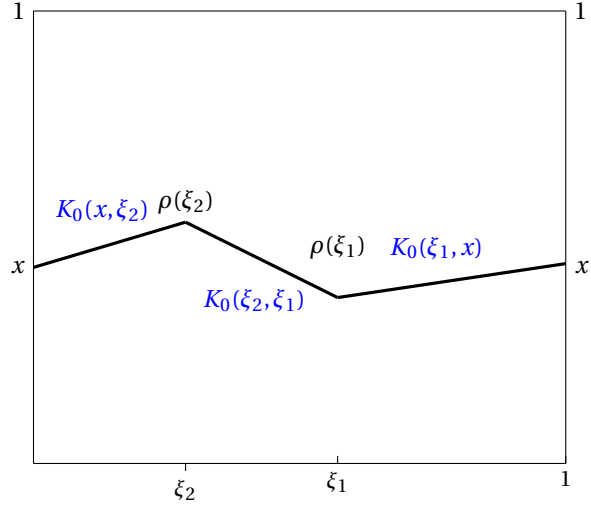
\begin{figure}[ht]
  \centering
  \begin{tikzpicture}
    \draw (0, 0) -- (7, 0);
    \draw (0, 0) -- (0, 6) ;
    \draw (7, 0) -- (7, 6) ;
    \draw (0, 6) -- (7, 6); 
    \draw (0,2.6) node[left]{$x$};
    \draw (7,2.6) node[right]{$x$};
    \draw(0,6) node[left]{$1$};
    \draw(7,6) node[right]{$1$};
    \draw(0.9,3.1) node[above]{$\color{blue} K_0(x,\xi_2)$};
    \draw(2.9,1.9) node[above]{$\color{blue} K_0(\xi_2,\xi_1)$};
    \draw(5.3,2.6) node[above]{$\color{blue} K_0(\xi_1,x)$};
    \draw(2,3.2) node[above]{$\rho(\xi_2)$};
     \draw(4,2.6) node[above]{$\rho(\xi_1)$};
    
    \def\uformula{plot (\noexpand\x,{0.3*\noexpand\x+2.6})}
    
    \def\vformula{plot (\noexpand\x,{-.5*(\noexpand\x-2)+3.2})}
    
    \def\wformula{plot (\noexpand\x,{0.15*(\noexpand\x-4)+2.2})}

    \draw[black]  (2, 0) -- +(0, -0.1) node[below] {\small $\xi_2$};
   
    \draw[black] (4, 0.1) -- +(0, -0.1) node[below] {\small $\xi_1$};
   
    \draw[black] (7, 0.1) -- +(0, -0.1) node[below] {\small $1$};
 \begin{scope}[black,very thick]
      \draw [domain = 0 : 2] \uformula;
      \draw [domain = 2 : 4] \vformula;
     \draw [domain = 4 : 7] \wformula;
    \end{scope}
%    \draw (-2,7) node[right] {$(-1)^{k+1} G_k(x,x)\stackrel{k=2}{=}\int_{[0,1]^2} K_0(x,\xi_2)\rho(\xi_2)K_0(\xi_2,\xi_1)\rho(\xi_1) K(\xi_1,x) d\xi_1 d\xi_2$};
  \end{tikzpicture}
  \caption{A diagram
    illustrating the flow generated by $b_0(x)$ in the case $k=2$ with $K(x,y)=-G_0(x,y)$.
    The flow is generated by a formal {\bf sum of weights along all broken 
    lines} of this type. Each edge $(\xi_{i}, \xi_{i-1})$ has a positive weight 
    ${\color{blue} K_0(\xi_i, \xi_{i-1})}$, each vertex $\xi_i$ has a positive weight $\rho(\xi_i)$ and the weight of the path is the product of weights.    
    When $\rho$ is a discrete measure (a discrete string) the formal 
    sum is a finite sum over all admissible broken lines.  }
  \label{fig:Fdiagram}
\end{figure}

We are especially interested in \textit{discrete strings}, also 
called \textit{degenerate strings} in \cite{Sabatier-NEEDS}, 
which are finite discrete measures, i.e.  $\rho =\sum_{j=1}^n m_j \delta_{x_j}, 
\, 0<x_1<x_2<\cdots<x_n<1$.  The above considerations 
carry over to this case with one important exception; as explained 
in \cite{colville-gomez-sz} 
in this case the action of $\In$ on continuous, piecewise smooth, functions $f$ 
is given by 
\begin{equation}\label{eq:Indistrib}
\In f=D_x(f(x)\rho)+\avg{f_x}(x)\rho, 
\end{equation} 
where now $D_x$ means the distributional derivative, $ \avg{f_x}(x)\delta_{x_j}=\avg{f_x}(x_j)\delta_{x_j}$, and $\avg{h}(x_j)$ denotes the 
arithmetic average of $h$ at $x_j$.  The time derivative of 
$\rho$ is easily computed to be 
\begin{equation} \label{eq:trho}
\dot \rho=\sum_{j=1}\big(\dot m_j \delta_{x_j}- \dot x_j m_j \delta_j^{(1)} \big)
\end{equation} 
and the evolution equation \eqref{eq:CGSalg} becomes a system of 
ODEs which for the flows truncated at level $k$ with $b_{-j}$ parametrized  
by \eqref{eq:bjGpara} takes the following simple form
\begin{subequations} \label{eq:xmODEs}
\begin{align} 
\dot x_j&=(-1)^{k+1} G_k(x_j, x_j), \label{eq:xdot} \\
\dot m_j&=(-1)^km_j \avg{G_{k,x}(x,x)}(x=x_j) \label{eq:mdot}. 
\end{align}
\end{subequations} 
\begin{theorem} \label{thm:HamODEs}
Equations \eqref{eq:xmODEs} are Hamiltonian with respect to the 
canonical Poisson bracket 
\begin{equation}
\{x_i, x_j\}=\{m_i, m_j\}=0, \qquad \{x_i, m_j\}=\delta_{i,j}
\end{equation} 
and the Hamiltonian 
\begin{equation} \label{eq:H} 
H^{(k)}=\frac{(-1)^{k+1}}{k+1}\int_0^1G_k(x,x) \rho(x) d\, x
=\frac{(-1)^{k+1}}{k+1}\sum_{i=1}^n m_i G_k(x_i,x_i). 
\end{equation} 
\end{theorem}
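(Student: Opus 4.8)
The plan is to verify directly that Hamilton's equations for $H^{(k)}$ in the given convention, namely $\dot x_i=\{x_i,H^{(k)}\}=\partial H^{(k)}/\partial m_i$ and $\dot m_i=\{m_i,H^{(k)}\}=-\,\partial H^{(k)}/\partial x_i$, reproduce \eqref{eq:xdot}--\eqref{eq:mdot}. Since $H^{(k)}$ depends on the $(x_i,m_i)$ only through the measure $\rho=\sum_j m_j\delta_{x_j}$ (and the diagonal values $G_k(x,x)$ built from it), the efficient route is to compute the single functional derivative $\delta H^{(k)}/\delta\rho$ and then pull it back along $\partial\rho/\partial m_i=\delta_{x_i}$ and $\partial\rho/\partial x_i=-m_i\delta_i^{(1)}$, both read off from \eqref{eq:trho}.

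The key algebraic lemma I would isolate first is the composition identity
\[
\int_0^1 G_a(x,\xi)\,\rho(\xi)\,G_b(\xi,y)\,d\xi=G_{a+b+1}(x,y),
\]
which is immediate from the broken-line representation \eqref{eq:Gj} by concatenating an $(a{+}1)$-fold and a $(b{+}1)$-fold chain across one extra vertex weighted by $\rho$. Using \eqref{eq:H} in the form $H^{(k)}=\tfrac{(-1)^{k+1}}{k+1}\int_0^1 G_k(x,x)\rho(x)\,dx$, I note that $\rho$ enters exactly $k+1$ times ($k$ factors inside $G_k$, one explicit factor). Varying each occurrence and collapsing the result with the composition identity, every one of the $k+1$ variations produces the same diagonal kernel $G_k(\xi,\xi)$ as the coefficient of $\delta\rho(\xi)$, so that
\[
\frac{\delta}{\delta\rho}\int_0^1 G_k(x,x)\,\rho(x)\,dx=(k+1)\,G_k(x,x),\qquad \frac{\delta H^{(k)}}{\delta\rho}=(-1)^{k+1}G_k(x,x).
\]
This simultaneously explains the normalization $\tfrac{(-1)^{k+1}}{k+1}$ and feeds directly into the two equations of motion.

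Pulling back gives the position equation cleanly, $\partial_{m_i}H^{(k)}=(-1)^{k+1}\int_0^1 G_k(x,x)\delta_{x_i}(x)\,dx=(-1)^{k+1}G_k(x_i,x_i)$, which is exactly $\dot x_i$ in \eqref{eq:xdot} since the continuous diagonal pairs unambiguously with $\delta_{x_i}$. The momentum equation comes from $\partial_{x_i}H^{(k)}=(-1)^{k+1}\int_0^1 G_k(x,x)\,\partial_{x_i}\rho(x)\,dx$ with $\partial_{x_i}\rho=-m_i\delta_i^{(1)}$, and a short computation with this sign yields
\[
\dot m_i=-\,\partial_{x_i}H^{(k)}=(-1)^{k}m_i\,\avg{\tfrac{d}{dx}G_k(x,x)}(x_i),
\]
which is \eqref{eq:mdot} once one reads $G_{k,x}(x,x)$ as the $x$-derivative of the one-variable diagonal field $x\mapsto G_k(x,x)$ (consistent with $b_0=(-1)^kG_k(x,x)$ being a function of $x$ alone).

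The step I expect to be the main obstacle is precisely this last pairing. For $k\geq 1$ the diagonal $x\mapsto G_k(x,x)$ is continuous but has a corner at each support point $x_i$: from $D_x^2 G_k=\rho\,G_{k-1}$ the derivative $G_{k,x}$ jumps by $m_iG_{k-1}(x_i,\cdot)$ across $x_i$, so $\tfrac{d}{dx}G_k(x,x)$ is discontinuous exactly at the base point of $\delta_i^{(1)}$, and the distributional pairing is \emph{a priori} ambiguous. I would resolve this by regularizing $\rho$ with smooth densities $\rho_\eta\to\rho$, for which every pairing is classical, and then passing to the limit $\eta\to0^+$; the symmetric limit selects the arithmetic average of the one-sided derivatives, i.e. the $\avg{\cdot}$ in \eqref{eq:mdot}. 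Crucially, this is the same averaging already built into the distributional action \eqref{eq:Indistrib} of $\In$ used to derive the flow, and it is exactly this consistency that guarantees the Hamiltonian vector field of $H^{(k)}$ equals the isospectral flow \eqref{eq:xmODEs} on the corners rather than differing from it by a spurious factor.
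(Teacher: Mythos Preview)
Your argument is correct and, for the $\dot x_j$ equation, is essentially the paper's: both exploit the cyclic symmetry of the $(k{+}1)$-fold $\rho$-dependence of $H^{(k)}$ to collapse the $m_j$-derivative to $(-1)^{k+1}G_k(x_j,x_j)$. For the $\dot m_j$ equation, however, you take a genuinely different route. The paper writes $H^{(k)}$ as an explicit finite sum $\frac{(-1)^{k+1}}{k+1}\sum_{i_0,\dots,i_k} m_{i_0}\cdots m_{i_k}\,G_0(x_{i_0},x_{i_1})\cdots G_0(x_{i_k},x_{i_0})$, differentiates this polynomial in the $x_j$'s directly, and then handles the delicate bookkeeping by a pointwise identity for $\partial_{x_i}G_0(x_i,x_j)$ that distinguishes the diagonal case $i=j$ (where one picks up twice the average) from the off-diagonal case; this reassembles into $\avg{G_{k,x}(x,x)}(x_j)$ with no limiting argument needed. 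You instead compute the single functional derivative $\delta H^{(k)}/\delta\rho=(-1)^{k+1}G_k(x,x)$ via the composition identity and then pair with $\partial_{x_i}\rho=-m_i\delta_{x_i}^{(1)}$, resolving the corner-versus-$\delta'$ ambiguity by a symmetric mollification $\rho_\eta\to\rho$.

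What each approach buys: your functional-derivative viewpoint is cleaner conceptually --- it makes the normalization $1/(k{+}1)$ transparent and treats the two equations of motion uniformly --- but the regularization step, as you acknowledge, is the crux, and it is not entirely routine since $G_k^\eta(\xi,\xi)$ itself varies with $\eta$ on the same scale as the mollifier. The paper's direct computation avoids any limit by staying entirely in the finite-dimensional variables $(x_i,m_i)$ and invoking only the elementary shape of $G_0$; the price is the explicit case split for $\partial_{x_i}G_0(x_i,x_j)$. Either way one arrives at the same average $\avg{G_{k,x}(x,x)}(x_j)$, and your observation that this averaging convention is exactly the one already baked into the action of $\In$ on piecewise smooth functions is the conceptual point ensuring consistency.
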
 
\begin{proof} 
The computation of the $\{x_j, H^{(k)}\}$ bracket is straightforward:
\begin{equation*} 
\begin{split} 
&\{x_j, H^{(k)}\}=\frac{\partial H^{(k)}}{\partial m_j}=\\&\frac{(-1)^{k+1}}{k+1}\frac{\partial}
{\partial m_j}
 \int_{[0,1]^{k+1}} 
 \rho(\xi_{k+1})G_0(\xi_{k+1},\xi_k)\rho(\xi_k)\cdots \rho(\xi_1) G_0(\xi_1,\xi_{k+1}) d\xi _{k+1} d\xi_k\cdots d\xi_1 =\\
&\frac{(-1)^{k+1}}{k+1}\big((k+1) G_k(x_j, x_j) \big)=(-1)^{k+1} G_k(x_j,x_j), 
 \end{split} 
\end{equation*} 
where we used $(k+1)$ times that $\frac{\partial \rho }{\partial m_j}=
\delta_{x_j} $.  This proves \eqref{eq:xdot}.  
The proof of \eqref{eq:mdot} can be broken into two steps.  
First we observe that, based on \eqref{eq:Gj},
\begin{equation} \label{eq:Gkder} 
\begin{split}
&\avg{G_{k,x} (x,x)}(x=x_j)=\\&\int \limits_{[0,1]^j} \avg{G_{0,x}(x,\xi_j)}(x=x_j)\rho(\xi_j)G_0(\xi_j, \xi_{j-1})\rho(\xi_{j-1})\cdots     \rho(\xi_1)   G_0(\xi_1,y)d\xi_j\cdots d\xi_1+\\
&\int \limits_{[0,1]^j} G_0(x,\xi_j)\rho(\xi_j)G_0(\xi_j, \xi_{j-1})\rho(\xi_{j-1})\cdots     \rho(\xi_1)   \avg{G_{0, x}(\xi_1,x)}(x=x_j)d\xi_j\cdots d\xi_1.  
\end{split} 
\end{equation} 
The second observation hinges on the fact that, since $x_i\neq x_j, \, i\neq j$, 
we have that 
\begin{equation}\label{eq:Gave}
G_{0,x_i} (x_i, x_j)=\begin{cases} \avg{G_{0,x}(x,x_j)}(x=x_i), \quad i\neq j\\
                                                       2 \avg{G_{0,x}(x,x_i)}(x=x_i), \quad i=j, 
                                 \end{cases} 
\end{equation}
which follows immediately from the shape of the Green's function, 
namely, 
\begin{equation*} 
G_0(x_i, x_j)=\begin{cases} c(x_i)\hat c(x_j), \quad  i<j\\ 
                                              c(x_i)\hat c(x_i),\quad  i=j\\
                                              c(x_j)\hat c(x_i) , \quad j<i
                                              \end{cases} 
\end{equation*} 
where $c(x), \hat c(x)$ are linear functions of $x$, satisfying appropriate 
boundary conditions at $x=0$, $x=1$, respectively.       
We need to compute, for simplicity expanding integrals in terms of sums, 
\begin{equation*} 
\begin{split} 
&\{m_j, H^{(k)} \}=-\frac{\partial H^{(k)} }{\partial x_j}=\\&\frac{(-1)^{k}}{k+1}\frac{\partial}
{\partial x_j}
\,  \sum_{i_{k+1}, i_{k}, \cdots i_1}
 m_{i_{k+1}} m_{i_{k}}\cdots m_{i_1}G_0(x_{i_{k+1}},x_{i_{k}})G_0(x_{i_{k}},x_{i_{k-1}})\cdots G_0(x_{i_{1}},x_{i_{k+1}})  =\\
& (-1)^{k}
 \sum_{i_{k+1}, i_{k}, \cdots, i_1}
 m_{i_{k+1}} m_{i_{k}}\cdots m_{i_1}\frac{\partial}
{\partial x_j}\big(G_0(x_{i_{k+1}},x_{i_{k}})\big) G_0(x_{i_{k}},x_{i_{k-1}})\cdots G_0(x_{i_{1}},x_{i_{k+1}}), 
 \end{split} 
\end{equation*}   
where in the second step we used that the expression under the sum 
is symmetric. Moreover, 
\begin{equation*} 
G_{0, x_j}(x_{i_{k+1}},x_{i_{k}})=\delta_{i_{k+1}, j} 
G_{0, x_j}(x_j, x_{i_k})+(1-\delta_{i_{k+1}, j} )\delta_{i_{k},j}G_{0, x_j}(x_{i_{k+1}}, x_j),  
\end{equation*} 
and after substituting back into the above formula and making use 
of \eqref{eq:Gave} we obtain 
\begin{equation*}
 \begin{split} 
&\{m_j, H^{(k)} \}= (-1)^{k}
m_j \big(\sum_{i_{k}, i_{i}, \cdots, i_1}
  m_{i_{k}}\cdots m_{i_1}
\avg{G_{0,x}(x,x_{i_{k}})}(x=x_j) G_0(x_{i_{k}},x_{i_{k-1}})\cdots G_0(x_{i_{1}},x_j)+\\
&\sum_{i_{k}, i_{i}, \cdots, i_1}
  m_{i_{k}}\cdots m_{i_1}
G_0(x_{i_{k}},x_{i_{k-1}})\cdots \avg{G_{0, y}(x_{i_{1}},y)}(y=x_j)\big)\stackrel{\eqref{eq:Gkder}}{=}\\
&(-1)^km_j \avg{G_{k,x}(x,x)}(x=x_j),
\end{split} 
\end{equation*} 
which  confirms \eqref{eq:mdot}.                                       
\end{proof} 
\begin{remark} 
In view of \eqref{rem:Gpositivity} we see that Hamiltonians 
given by \eqref{eq:H} are non-negative.  Moreover, 
they can be shown to be simply related to 
spectral invariants (see i.e. \cite{colville-gomez-sz})
\begin{subequations} \label{eq:Invariants} 
\begin{align}
&I_0=\int_{[0,1]} \rho(\xi_1)G_0(\xi,\xi) \, d\xi\\
&I_j=\int_{0<\xi_1<\xi_2<\cdots<\xi_j<1}\rho(\xi_j)(\xi_j-\xi_{j-1})\cdots
\rho(\xi_2)(\xi_2-\xi_1)\rho(\xi_1) G_0(\xi_1, \xi_j) \, d\xi\cdots d\xi_j. 
\end{align} 
\end{subequations}
For example, if $k=1$ then for $H^{(1)} $ computed from \eqref{eq:H} we get 
\begin{equation*} 
H^{(1)}=\tfrac12 I_1^2 + I_2. 
\end{equation*} 
\end{remark} 
\subsection{Evolution of spectral data} 
We will briefly discuss how one can solve equations \eqref{eq:xdot}, 
\eqref{eq:mdot}, or alternatively $\rho_t=\In b_0$ 
with $b_{0}(x)=(-1)^k G_k(x,x)$  
and 
\begin{equation} \label{eq:rho} 
\rho=\sum_{j=1}^nm_j\delta_{x_j}, 
\qquad 0<x_1<x_2<\cdots<x_n<1. 
\end{equation}   
Let $\phi$ satisfy \eqref{eq:string} and the left initial condition $\phi_x(0)-h\phi(0)=0$.  Then for $0\leq x<x_1$ (the position of the first mass) we 
can take 
\begin{equation}\label{eq:phi1}
\phi(x)=hx+1.
\end{equation} 
Since $\phi$ changes with the deformation time according to \eqref{eq:vtdeform} we have that for $0<x<x_1$ 
\begin{equation}
0=(-\tfrac12 b_x+\beta)(hx+1)+h b 
\end{equation} 
which implies 
\begin{equation}
\beta=\tfrac12 b_x(0)-hb(0). 
\end{equation} 
In fact, for $b=b_{0}+\frac{b_{-1}}{z}+\cdots+ \frac{b_{-k}}{z^k}$ and $b_{-j} $ given by \eqref{eq:bjGpara}, 
$\beta$ takes the following simple form 
\begin{equation}\label{eq:betaz}  
\beta(z)=\frac{1}{2z^k}.  
\end{equation} 
The spectrum is given by the zeros of the function 
$D(z)=\phi_x(1;z)+H\phi(1;z)$ for $0\leq H<\infty$ and $D(z)=\phi(1;z)$ 
for $H=\infty$, and we have 
the following linearization result
\begin{theorem} \label{thm:linearization}
Let \, $0<H<\infty$,  and let 
\begin{equation} 
N(z;t) \stackrel{def}{=}\phi_x(1;z)-H\phi(1;z).  
\end{equation}
Then for every point $z_i$ of the spectrum the time derivative of $N(z_i;t)$ 
satisfies
\begin{equation}
\dot N(z_i;t)=\tfrac{1}{z_i^k }N(z_i;t). 
\end{equation} 
\end{theorem}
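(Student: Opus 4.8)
The plan is to exploit the fact that the normalized eigenfunction $\phi$ itself evolves according to the postulated deformation. Since $\phi(x)=hx+1$ on $[0,x_1)$ is independent of $t$ and $z$, the very choice $\beta=\tfrac12 b_x(0)-hb(0)$ makes $a\phi+b\phi_x=0$ there, so that $\phi_t=a\phi+b\phi_x$ holds on $[0,x_1)$. The difference $\psi:=\phi_t-(a\phi+b\phi_x)$ solves the homogeneous equation $\psi_{xx}=-z\rho\psi$ (this is precisely the compatibility $v_{xxt}=v_{txx}$ underlying \eqref{eq:rhotdeform}--\eqref{eq:ab}) and vanishes together with its $x$-derivative on $[0,x_1)$; hence $\psi\equiv 0$, and $\phi_t=a\phi+b\phi_x$ on all of $[0,1]$, in particular near $x=1$. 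Because the masses lie strictly inside $(0,1)$, we have $\rho\equiv 0$ near $x=1$, so $\phi$ is linear there with $\phi_{xx}(1)=0$, while $b$ is quadratic there by \eqref{eq:ktrunc}. Differentiating once more in $x$ and using $a_x=-\tfrac12 b_{xx}$, I record at $x=1$ (all quantities evaluated there)
\[\phi_t(1)=(-\tfrac12 b_x+\beta)\phi+b\phi_x, \qquad \phi_{xt}(1)=-\tfrac12 b_{xx}\phi+(\tfrac12 b_x+\beta)\phi_x.\]

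Next I would impose that $z=z_i$ lies in the spectrum, i.e. $\phi_x(1)+H\phi(1)=0$, so that $\phi_x(1)=-H\phi(1)$ and $N(z_i)=\phi_x(1)-H\phi(1)=-2H\phi(1)$; note $\phi(1)\neq 0$, since $\phi(1)=\phi_x(1)=0$ would force $\phi\equiv 0$. Substituting $\phi_x(1)=-H\phi(1)$ into $\dot N=\phi_{xt}(1)-H\phi_t(1)$ cancels the $b_x$-terms and yields
\[\dot N(z_i)=\phi(1)\Big(-\tfrac12 b_{xx}(1)-2H\beta+H^2 b(1)\Big).\]
With $\beta(z)=\tfrac{1}{2z^k}$ from \eqref{eq:betaz} the term $-2H\beta=-H/z^k$, so the claim $\dot N(z_i)=z_i^{-k}N(z_i)=z_i^{-k}(-2H\phi(1))$ becomes equivalent to the purely spatial boundary relation $-\tfrac12 b_{xx}(1)+H^2 b(1)=-H/z^k$. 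Expanding $b=b_0+\sum_{j=1}^k b_{-j}z^{-j}$ and matching powers of $z$, this splits into $-\tfrac12 b_{-j,xx}(1)+H^2 b_{-j}(1)=0$ for $0\le j\le k-1$ and $-\tfrac12 b_{-k,xx}(1)+H^2 b_{-k}(1)=-H$.

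To verify these I would use the parametrization $b_{-j}(x)=(-1)^{k-j}G_{k-j}(x,x)$ together with the symmetry $G_m(x,y)=G_m(y,x)$ and the boundary conditions $G_{m,x}(1,y)+HG_m(1,y)=0$ (hence also $G_{m,y}(x,1)+HG_m(x,1)=0$). Writing $g_m(x):=G_m(x,x)$, symmetry gives $g_m''(1)=2G_{m,xx}(1,1)+2G_{m,xy}(1,1)$, while differentiating the second-variable condition in $x$ and evaluating at $x=1$ gives $G_{m,xy}(1,1)=-HG_{m,x}(1,1)=H^2 G_m(1,1)$. For $m\ge 1$ (that is, $j\le k-1$) the relation $D_x^2G_m=\rho G_{m-1}$ and the vanishing of $\rho$ near $x=1$ force $G_{m,xx}(1,1)=0$, so $g_m''(1)=2H^2 g_m(1)$ and indeed $-\tfrac12 g_m''(1)+H^2 g_m(1)=0$, which gives the homogeneous relations after multiplying by $(-1)^{k-j}$.

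The hard part is the remaining term $m=0$, i.e. $b_{-k}=g_0=G_0(x,x)$, because there the formal diagonal second derivative of $G_0$ meets the $\delta$-singularity of $D_x^2 G_0=\delta$ and the symmetry shortcut is unavailable. Here I would instead use the explicit product form $G_0(x,y)=W^{-1}c(x_<)\hat c(x_>)$, where $c(x)=hx+1$ satisfies the left condition, $\hat c$ is the linear solution with $\hat c'(1)+H\hat c(1)=0$, and $W=c\hat c'-c'\hat c$ is the constant fixed by the unit jump of $\partial_x G_0$ at $x=y$. Then $g_0(x)=W^{-1}c(x)\hat c(x)$ is a genuine quadratic, so $g_0''(1)=2W^{-1}c'(1)\hat c'(1)$, and a direct computation of $-\tfrac12 g_0''(1)+H^2 g_0(1)$, after using $\hat c'(1)=-H\hat c(1)$, collapses to $-HW^{-1}\big(c(1)\hat c'(1)-c'(1)\hat c(1)\big)=-H$, precisely because the numerator is the Wronskian $W$. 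This supplies the single inhomogeneous contribution $-H$, completes the matching of powers of $z$, and yields $\dot N(z_i)=z_i^{-k}N(z_i)$. I expect the only genuine subtlety to be this endpoint $\delta$-term bookkeeping; the rest follows directly from the Green's-function boundary conditions and the normalization of $\phi$.
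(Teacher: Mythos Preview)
Your argument is correct and follows the same overall strategy as the paper: propagate the deformation law $\phi_t=a\phi+b\phi_x$ to $x=1$, substitute $D(z_i)=0$, and then evaluate the resulting boundary combination using the structure of the Green's functions $G_m$. The difference is only in how the boundary bookkeeping is organized. The paper invokes the isospectrality identity $\tfrac12 b_{xx}(1)+Hb_x(1)+H^2 b(1)=0$ (proved in \cite{colville-gomez-sz}) to rewrite $\dot N(z_i)$ as $\bigl(\beta(z_i)-\tfrac12(b_x(1)+2Hb(1))\bigr)N(z_i)$ and then evaluates the first-derivative quantity $\tfrac12 b_x(1)+Hb(1)$ term by term, using $G_{0,x}(1,y)+HG_0(1,y)=0$ for $y\neq 1$ and the endpoint value $-1$ when $y=1$. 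You instead keep the second-derivative combination $-\tfrac12 b_{xx}(1)+H^2 b(1)$ and evaluate it directly via $g_m''(1)=2G_{m,xx}(1,1)+2G_{m,xy}(1,1)$ together with the boundary conditions, isolating the single inhomogeneous contribution from the Wronskian of $G_0$. Your route is slightly more self-contained, since it does not appeal to the cited isospectrality identity; the paper's route is marginally lighter computationally, needing only first derivatives of $G_0$ at the endpoint. Both collapse to the same Wronskian evaluation at $x=1$, and both give $\dot N(z_i)=z_i^{-k}N(z_i)$.
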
 
\begin{proof} 
We start by computing 
$\dot N(z_i;t)=\phi_{t,x}(1;z_i)-H\phi_t(1;z_i)$ 
with the help of \eqref{eq:vtdeform}, \eqref{eq:ab} and 
employing one intermediate result proven in \cite{colville-gomez-sz} 
stating that one of the necessary conditions for isospectrality can be written 
\begin{equation*}
\tfrac12 b_{xx}(1)+Hb_x(1)+H^2 b(1)=0. 
\end{equation*}
Then, by straightforward computation, and the fact that on the 
spectrum $D=0$, we obtain: 
\begin{equation*}
\dot N(z_i;t)=\big(\beta(z_i)-\frac{b_x(1;z_i)+2H b(1;z_i)}{2} \big) N(z_i;t). 
\end{equation*}
For the flows $b=b_{0}+\frac{b_{-1}}{z}+\cdots+ \frac{b_{-k}}{z^k}$ and 
$b_{-j} $ given by \eqref{eq:bjGpara} the second term is zero, when 
evaluated at $x=1$, except 
for the last term $\frac{b_{-k}}{z_i^k} $ for which one obtains $\frac{-1}{z_i^k}$.  The proof of this 
claim follows from 
$G_{0,x}(x;y)(x=1, y\neq 1)+HG_0(x;y)(x=1, y\neq 1)=0$ and 
$G_{0,x}(x;x)(x=1)+HG_0(x;x)(x=1)=-1$.  Finally, using \eqref{eq:betaz}, we obtain the required result.  
\end{proof} 
It is now easy to cover the remaining two special cases $H=0, \, H=\infty$.  
We modify the definition of $N$; we set $N=-\phi(1;z)$ and $N(z;t)=\phi_x(1;z)$ respectively.  
\begin{theorem} \label{thm:linearizationbis}
Let $H=\infty$ or $H=0, (h,H)\neq 0$.    
Then for every point $z_i$ of the spectrum 
\begin{equation}
\dot N(z_i;t)=\tfrac{1}{z_i^k }N(z_i;t).  
\end{equation} 
\end{theorem}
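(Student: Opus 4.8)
The plan is to run the same argument as in the proof of Theorem~\ref{thm:linearization}, reading the two endpoint cases as the degenerate limits $H\to\infty$ and $H\to0$ of the general right boundary condition. The simplification I expect here is that the defining spectral relation annihilates the troublesome boundary terms outright: for $H=\infty$ a point $z_i$ of the spectrum is characterized by $\phi(1;z_i)=0$, while for $H=0$ it is characterized by $\phi_x(1;z_i)=0$. Because isospectrality keeps $z_i$ fixed in $t$, I would compute $\dot N(z_i;t)$ as a plain $t$-derivative of $\phi$ at the fixed spectral value $z=z_i$, using $\phi_t=a\phi+b\phi_x$ from \eqref{eq:vtdeform} and $a=-\tfrac12 b_x+\beta$ from \eqref{eq:ab}, and invoking in the Dirichlet case the degenerate isospectrality constraint $b(1)=0$ (which also follows directly from the parametrization \eqref{eq:bjGpara}, since every $G_m(1,1)=0$).

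First I would treat $H=\infty$, where $N=\phi_x(1;z)$. Differentiating $\phi_t=a\phi+b\phi_x$ in $x$ and evaluating at $x=1$, the terms carrying the factors $\phi(1;z_i)=0$ and $b(1)=0$ drop out, leaving $\dot N(z_i;t)=\big(a(1)+b_x(1)\big)N(z_i;t)=\big(\tfrac12 b_x(1;z_i)+\beta(z_i)\big)N(z_i;t)$. For $H=0$, where $N=-\phi(1;z)$, I would use $\phi_x(1;z_i)=0$ directly in $\phi_t=a\phi+b\phi_x$ to get $\dot N(z_i;t)=a(1)\,N(z_i;t)=\big(-\tfrac12 b_x(1;z_i)+\beta(z_i)\big)N(z_i;t)$. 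In both cases the whole problem has reduced to evaluating $b_x(1)$, the only difference between the two endpoints being the sign in front of $b_x(1)$.

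The heart of the matter is then the claim that $b_x(1;z)=+z^{-k}$ for $H=\infty$ and $b_x(1;z)=-z^{-k}$ for $H=0$. Writing $b=\sum_{j=0}^{k}b_{-j}/z^{j}$ with $b_{-j}(x)=(-1)^{k-j}G_{k-j}(x,x)$ as in \eqref{eq:bjGpara}, I would show $b_{-j,x}(1)=0$ for $j<k$ and $b_{-k,x}(1)=\pm1$. For the first assertion I differentiate the integral representation \eqref{eq:Gj} of $G_m(x,x)$ with $m=k-j\geq1$ along the diagonal; each of the two resulting terms then contains an outer factor of the form $G_0(1,\xi)$ (for $H=\infty$) or $G_{0,x}(1,\xi)$ (for $H=0$) evaluated at a mass node $\xi\neq1$, so the Dirichlet condition $G_0(1,\cdot)=0$, respectively the Neumann condition $G_{0,x}(1,\cdot)=0$, makes every such term vanish. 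For the surviving term $b_{-k}(x)=G_0(x,x)=c(x)\hat c(x)$ I would use the product structure to write $b_{-k,x}(1)=c'(1)\hat c(1)+c(1)\hat c'(1)$ and evaluate it from the Wronskian normalization $c\hat c'-c'\hat c\equiv1$ together with the endpoint condition on $\hat c$: $\hat c(1)=0$ for $H=\infty$ gives $+1$, while $\hat c'(1)=0$ for $H=0$ gives $-1$. Substituting $b_x(1)=\pm z^{-k}$ and $\beta(z)=\tfrac{1}{2z^k}$ from \eqref{eq:betaz} into the two formulas of the preceding paragraph yields $\dot N(z_i;t)=z_i^{-k}N(z_i;t)$ in both cases.

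The step I expect to be most delicate is exactly this sign bookkeeping: the Wronskian produces $b_{-k,x}(1)=+1$ in the Dirichlet case but $-1$ in the Neumann case, and this reversal must cancel precisely against the opposite signs $+\tfrac12$ versus $-\tfrac12$ of the $b_x(1)$ coefficient coming from the two different definitions of $N$, so that both endpoints land on the same $z_i^{-k}$. A secondary point I would check carefully is that evaluating $G_m(x,x)$ and its $x$-derivative at $x=1$ is unambiguous and free of the averaging subtleties of \eqref{eq:Indistrib}; this is legitimate because $x=1$ lies outside the support of the discrete density $\rho$, so $G_m(x,x)$ is genuinely smooth near the endpoint. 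The hypothesis $(h,H)\neq(0,0)$ enters to guarantee that $G_0$, hence the pair $c,\hat c$ with the above normalization, actually exists.
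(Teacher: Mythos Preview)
Your argument is correct and follows essentially the same route as the paper's own proof: compute $\dot N$ from $\phi_t=a\phi+b\phi_x$, reduce to $\big(\beta(z_i)\pm\tfrac12 b_x(1)\big)N$, and then evaluate $b_x(1)$ using the diagonal Green's function, with only the $j=k$ term surviving. The paper writes out only the case $H=\infty$ and leaves $H=0$ to the reader; you carry out both cases explicitly and are more careful about the Wronskian identity $c\hat c'-c'\hat c\equiv 1$ that fixes $b_{-k,x}(1)=+1$ for $H=\infty$ and $-1$ for $H=0$, so that the sign flip in front of $\tfrac12 b_x(1)$ indeed yields $z_i^{-k}$ in both endpoints.
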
 
\begin{proof} 
We will give the proof in the case $H=\infty$, leaving the case 
$H=0$ for interested readers.  
Following the same steps as above we obtain: 
\begin{equation*}
\dot N(z_i;t)=\big(\beta(z_i)+\tfrac12 b_x(1)\big) N(z_i;t), 
\end{equation*}
and subsequently observe that the only change now is that $G_0(x;y)(x=1, y\neq 1)=0$ and 
$G_{0,x}(x;x)(x=1)=-1$, which in conjuncture with \eqref{eq:betaz} imply the required result.  
\end{proof} 
\begin{remark} 
The reason why the case $(h,H)=(0,0)$ is excluded is 
because the formula for the Green's function using iterations 
\eqref{eq:Giter} does not hold since $z=0$ is now in the spectrum.  
\end{remark} 
As is well known the Weyl function (Weyl-Titchmarch function) 
is a convenient way of storing information about the boundary 
value problem \eqref{eq:string}.  We can define it
as 
\begin{equation}\label{eq:W}
W(z)=\frac{N(z)}{D(z)}.  
\end{equation}
$W(-z)$ is an example of a \textit{Herglotz} function (i.e. p.17 in \cite{Deift-book}), in particular $-W(-z)$ admits an integral Stieltjes
representation
\begin{equation}\label{eq:Stieltjes rep}
-W(-z)=\gamma+\int \frac {d\mu(\zeta)}{z+\zeta}, \qquad \gamma \in \RR, 
\end{equation}
where $d\mu$ is a positive measure supported on positive reals $\RR_+$.  
In the case of interest, when dealing with discrete strings, $d\mu=\sum_{j=1}^n
\mu_j \delta_{z_j}, \, \mu_j>0$, where $z_j$ are the eigenvalues 
of the boundary value problem.   The dependence on the deformation 
parameter, in view of \eqref{thm:linearization} and \ref{thm:linearizationbis}, is given by $\mu_j(t)=\mu_j(0) e^{\tfrac{t}{z_j^k}}$ for the flow 
given by \eqref{eq:ktrunc}.  
\begin{example} \label{ex:first}
We will illustrate the developed formalism on  the case of the string boundary value problem \eqref{eq:string} corresponding to the mass density \eqref{eq:rho}, 
with boundary conditions  of the type $0<h<\infty$ and $H=0$. 
The (bare) Green's function in this case is 
\begin{equation*}
G_0(x,y)= \begin{cases} \tfrac{hx+1}{-h}, \quad x<y\\
                                    \tfrac{hy+1}{-h}, \quad y<x. 
                 \end{cases}
\end{equation*}
With the help of \eqref{eq:Gj}, \eqref{eq:bjGpara} we can 
now write all coefficients in \eqref{eq:ktrunc}, in particular 
obtaining 
\begin{equation}
b_0(x)=(-1)^k \sum_{j_1, j_2, \cdots, j_k} G_0(x, x_{j_k})m_{j_k} 
G_0(x_{j_k}, x_{j_{k-1}}) m_{j_{k-1}}\cdots m_{j_1} G_0(x_{j_1},x). 
\end{equation}
We observe that the first equation of motion, \eqref{eq:xdot}, 
can be written in a compact way if one defines 
matrices $M=\textrm{diag}(m_1, \dots, m_n)$ and $K=[K_{i, j}]=[-G_0(x_i, x_j)]$.  
Then 
\begin{equation} \label{eq:dotxmatrix}
\dot x_j=(KMKM...MK)_{j, j}.  
\end{equation} 

In particular, if $k=1$, we 
get $\dot x_j=(KMK)_{j, j}$.  Likewise, the second equation \eqref{eq:mdot} 
can be written in an analogous way by introducing 
a matrix $J=[J_{i, j}]=2\avg{G_{0,x}(x, x_j)(x=x_i)}$, resulting in 
\begin{equation} \label{eq:dotmmatrix}
\dot m_j=(MJMKMK\cdots MK)_{j,j}.  
\end{equation}

We now turn to 
outlining the integration process for the mass density \eqref{eq:rho}.

Let
\begin{equation*}
	\phi |_{I_{j}}=\phi_{j}=p_{j}(x-x_{j})+q_{j},\qquad \text{where } I_{j}=(x_{j},x_{j+1}), \text{ and } x_0=1, \, x_{n+1}=1, 
\end{equation*}
denote the solution to the initial value problem $-\phi_{xx}=z\rho \phi, \,\,   \phi_x(0)-h\phi(0)=0$ whose construction proceeds as follows.  
We start off with $p_{0}=h$ and $q_{0}=1$ to satisfy the initial condition 
at $x=0$.  Then letting $l_{j}$ denote the length of the interval $I_j$ and imposing the continuity of $\phi$ at $x=x_{j+1}$ one obtains that $p_{j}=\frac{q_{j+1}-q_{j}}{l_{j}}$.  The jump in the derivative of $\phi$ at $x=x_{j+1}$ results in $p_{j+1}-p_{j}=-zm_{j+1}q_{j+1}$. On the last interval $I_n$, $\phi_{n}=p_{n}(x-x_{n})+q_{n}$.  Let us define the Weyl function for this problem
\begin{equation*}
	W(z)\stackrel{\emph{def}}{=}-\frac{\phi(1;z)}{\phi_{x}(1;z)}. 
\end{equation*}
From the construction $-W(-z)=\frac{p_n(-z)l_n+q_n(-z)}{p_n(-z)}=l_n +\frac{q_n(-z)}{p_n(-z)}$. 
Iterating with the help of continuity and jump conditions we obtain 
\begin{equation*}
-W(-z)=l_{n}+\cfrac{1}{zm_n+\cfrac{1}{l_{n-1}+\cfrac{1}{\ddots+\cfrac{1}{l_{0}+\frac{1}{h}}}}},  
\end{equation*}
and, upon comparing with \eqref{eq:Stieltjes rep}, 
we obtain 
\begin{equation}
\gamma=l_n, \qquad 
\int \frac{e^{\tfrac{t}{\zeta^k}} d\mu(\zeta;0)}{z+\zeta}=\cfrac{1}{z m_n+\cfrac{1}{l_{n-1}+\cfrac{1}{\ddots+\cfrac{1}{l_{0}+\frac{1}{h}}}}}, 
\end{equation}
where we denote $d\mu(\zeta;0)=d\mu(\zeta;t=0)$, 
which shows that using Stieltjes' inversion formulas (see \cite{stieltjes, BSS-Stieltjes}) we can recover $m_n, \cdots, m_1$ and $l_{n-1}, \cdots, l_0$ in 
terms of the Hankel determinants of the moments of the measure $e^{\tfrac{t}{\zeta^k}} d\mu(\zeta;0)$.  
We will briefly review the relevant part of Stieltjes' theory.  
In \cite{stieltjes} T. Stieltjes studied the 
continued fraction 
\begin{equation}\label{eq:contdfrac}
f(z)=\cfrac{1}{a_1z +\cfrac{1}{a_2+\cfrac{1}{\qquad  \ddots +\cfrac{1}{a_{2j-1}z+\frac{1}{a_{2j}+ \dots
\ddots }}}}}
\end{equation}
under the assumption $a_j>0$, viewed as a function of the 
complex variable $z$.  
Then he considered 
a formal Laurent expansion at $z=\infty$ of the continued fraction, written 
as 
\begin{equation*}
\frac{c_0}{z}-\frac{c_1}{z^2}+\frac{c_3}{z^3}+\cdots
\end{equation*}
Then the main thrust of Stieltjes's theory went towards establishing 
the existence of a measure (Stieltjes measure) $d\alpha$ supported 
on $\RR_+=[0,\infty)$ such that 
$c_j=\int_{\RR_+} \zeta^j d\alpha(\zeta)$.  
In the case of interest this measure is unique (because the Laurent series 
converges) and is simply 
given by 
\begin{equation} \label{eq:measure} 
d\alpha(\zeta)=e^{\tfrac{t}{\zeta^k}} d\mu(\zeta;0)=\sum_{j=1}^n 
\mu_j(0)e^{\tfrac{t}{z_j^k}} \delta_{z_j}.
\end{equation} 
Then one defines the Hankel matrix of moments
\begin{equation}\label{eq:Hankel} 
H=\begin{pmatrix} c_{-1}&c_0&c_1&\cdots& c_j&\cdots\\
                             c_{0}&c_1&c_2&\cdots& c_{j+1}&\cdots\\
                             c_{1}&c_2&c_3&\cdots& c_{j+2}&\cdots\\
                             \cdots&\cdots&\cdots&\cdots&\cdots\\
                             c_{l-1}&\cdots&\cdots&\cdots& c_{j+l}&\cdots
                             \end{pmatrix}, 
\end{equation} 
  where we included $c_{-1}$ because in our case 
  the measure $d\alpha$ is supported away from $0$.  With the 
  help  of $H$ one introduces certain minors, called 
  Hankel determinants, denoted $\Delta_k^l$, which are defined 
  as the determinants of $k\times k$ submatrices whose $(i, j)$ entries
  are $c_{i+j+l-2}$, while $\Delta_0^l=1$ by convention.  
    Finally, one can express the coefficients $a_j$ in \eqref{eq:contdfrac}
  in terms of these determinants using the formulae
    \begin{equation}\label{eq:Sinversion}
a_{2j}=\frac{(\Delta_j^0)^2}{\Delta_j^1 \Delta_{j-1}^1}, \qquad a_{2j+1}=\frac{(\Delta_j^1)^2}{\Delta_j^0 \Delta_{j+1}^2}. 
\end{equation}
For simplicity let us set  $j'=n-j$, then an immediate application of these formulae to our case yields: 
\begin{subequations}
\begin{align} 
m_j&=\frac{(\Delta_{j'}^1)^2}{\Delta_{j'}^0 \Delta_{(j-1)'}^2}, \qquad  1\leq j\leq n\label{eq:mj}\\
l_j&=\frac{(\Delta_{j'}^0)^2}{\Delta_{j'}^1 \Delta_{(j+1)'}^1}, \qquad 1\leq j\leq  n-1\label{eq:lj}\\
l_0+\tfrac 1h&=\frac{(\Delta_{0'}^0)^2}{\Delta_{0'}^1 \Delta_{1'}^1}, \qquad 
\label{eq:l0}
\end{align} 
\end{subequations}
                             
In the final step,  one can recover $l_n$ by observing that 
\begin{equation*}
-W(0)=\frac{h+1}{h}=1+\frac1h=l_n+\int \frac{e^{\frac{t}{\zeta^k}}}{\zeta} d\mu(\zeta;0)=l_n+\Delta_{1}^{-1}
\end{equation*}
and solve for $l_n$, or compute $l_n$ from the formula for the 
total length of the string: $l_n=1-\sum_{i=0}^{n-1}l_i$.  
We conclude this example by noting that the case $h=\infty$, which 
means the Dirichlet condition on the left end and the Neumann condition 
on the right, can be handled by taking the limit $h\rightarrow 0$.  Interestingly, the Dirichlet-Neumann case appeared, 
somewhat unexpectedly, in the recent work on the modified Camassa-Holm equation
\cite{ChangS}.  
\end{example} 

\begin{example} \label{ex:CH}
The case of the CH equation \eqref{eq:CH} is not the main 
focus of this paper.  However, one might be tempted to compare the 
known formulas with what one gets if the formalism is applied to that 
case.  The literature on the CH equation is so vast that one can 
not do justice to many important contributions to the subject.  We will 
only refer to papers the results of which overlap in scope 
with the presented formalism.  One way of looking at 
the CH theory is to start from 
the spectral problem 
\begin{equation}\label{eq:CHx}
-v_{xx}+\tfrac14 v=zm v,    \qquad -\infty <x<\infty, 
\end{equation}
with vanishing boundary conditions $v\rightarrow 0$ as $\abs{x}\rightarrow \infty$ \cite{CH}.  We note that this spectral problem 
also appears in \cite{Sabatier-Constants}.  In \eqref{eq:CHx} $m$ 
is a measure, similar to the string mass density $\rho$ in 
\eqref{eq:string}.  For convenience we will 
assume that $m$ has a compact support as this is sufficient for our 
purposes.  We deform $m$ in exactly the same way as in 
\eqref{eq:vtdeform}, i.e. 
\begin{equation*}
v_t=av+bv_x
\end{equation*}
and from the condition $v_{txx}=v_{xxt}$ we 
get 
\begin{subequations}
\begin{align}
zm_t&=\tfrac12 b_{xxx}-\tfrac12 b_x +z \Inm b, \\
a&=-\tfrac12 b_x +\beta. 
\end{align}
\end{subequations}
For $b$ regular at $z=\infty$ the evolution 
equation is the same as for the string, namely, 
\begin{equation*}
m_t=\Inm b_0, 
\end{equation*} 
where $b_0$ is the limit of $b$ at $z=\infty$.  However, 
the constraints are 
different.  Let us analyze the constraints for the rational model 
specified in \eqref{eq:krat}.  The resulting constraints 
can be simply obtained by changing $D_x^3$ to $D_x^3-D_x$ 
in \eqref{eq:kratcons}.  
Let us denote by $G_{\epsilon}(x,y)$, the 
Green's function for $D_x^2-\tfrac14-\epsilon m$ vanishing as $\abs{x}\rightarrow \infty$.  For these boundary conditions
\begin{equation*} 
G_{\epsilon}=G_0+\epsilon G_1+\epsilon^2 G_2+\cdots 
\end{equation*}
where 
\begin{equation} 
G_0(x,y)=-e^{\tfrac{\abs{x-y}}{2}}. 
\end{equation} 
The perturbative expansion produces essentially the same formula 
as in \eqref{eq:Gj}, namely 
\begin{equation}\label{eq:Gjbis}
G_j(x,y)=\int \limits_{\RR ^j} G_0(x,\xi_j)m(\xi_j)G_0(\xi_j, \xi_{j-1})m(\xi_{j-1})\cdots     m(\xi_1)   G_0(\xi_1,y)d\xi_j\cdots d\xi_1.  
\end{equation}
\end{example}
Now we only check that  the diagonal of the Green's function, $G_{\epsilon}(x,x)$, satisfies 
\begin{equation}
G_{\epsilon, xxx}(x,x)-G_{\epsilon, x}(x,x)=\epsilon \Inm G_{\epsilon}(x,x), 
\end{equation}
the proof of which is essentially identical to the one in Lemma 5.1 in \cite{colville-gomez-sz}.  This result, in conjuncture with 
the fact that, after changing $D_x^3$ to $D_x^3 -D_x$, 
the last equation in \eqref{eq:kratcons} is 
satisfied by $G_{\epsilon}(x,x)$ and the iterative procedure employed in  
the analysis of rational flows for the string goes through, resulting 
in the validity of the final formulae \eqref{eq:kratconspar}.  In particular, 
the limit $\epsilon \rightarrow 0^+$ can be carried out yielding an explicit 
parametrization of the flow $b=b_0+\tfrac{b_{-1}}{z}+\cdots+ \tfrac{b_{-k}}{z^k}$, structurally the same as \eqref{eq:bjGpara}, namely
$ 
b_{-j}(x)=(-1)^{k-j} G_{k-j}(x,x).  
$ 
The most relevant physically is the case of the discrete measure 
$m=\sum_{j}^n m_j \delta_{x_j}$ (the peakon sector for $k=1$ 
\cite{CH}).  The equations of motion for $x_j$ and $m_j$ are 
given by equations \eqref{eq:dotxmatrix} and \eqref{eq:dotmmatrix}
with $K=[e^{-\frac{\abs{x_i-x_ j}}{2}}]$ and $J=[\sgn(x_j-x_i)e^{-\tfrac{\abs{x_j-x_i}}{2}}]$.   For example, 
for $k=1$ we 
obtain 
\begin{align*} 
\dot x_j&=(KMK)_{j,j}=\sum_{i}m_i e^{-\abs{x_j-x_i}}, \\
\dot m_j &=(MJMK)_{j,j} =m_j \sum_i \sgn(x_j-x_i)m_i e^{-\abs{x_j-x_i}}, 
\end{align*} 
well known from the CH theory \cite{chh}.  

For $k=2$ we obtain 
\begin{align*} 
\dot x_j&=(KMKMK)_{j,j}=\sum_{i, l} e^{-\tfrac{\abs{x_j-x_i}}{2}}m_ie^{-\tfrac{\abs{x_i-x_l}}{2}}m_le^{-\tfrac{\abs{x_l-x_j}}{2}}, \\
\dot m_j &=(MJMKMK)_{j,j}=m_j \sum_{i,l} \sgn(x_j-x_i)e^{-\tfrac{\abs{x_j-x_i}}{2}}m_ie^{-\tfrac{\abs{x_i-x_l}}{2}}m_le^{-\tfrac{\abs{x_l-x_j}}{2}}, 
\end{align*} 
clearly indicating a general pattern for arbitrary $k$.  Although 
the higher order flows in the CH theory have been investigated 
\cite{McKean-Fred, schiffdual, qiao-ch} we are not aware of any 
work on the peakon sector of the CH hierarchy.  In particular 
we conclude that explicit integration of higher-order peakon flows 
will be possible using Stieltjes continued fractions generalizing 
the results of \cite{BSS-Stieltjes}.  We leave this however for future work.  
\section{Acknowledgements} 
The second author was supported in part by the Natural  Sciences and Engineering Research Council of Canada (NSERC).  He also would 
like to thank the H. Niewodnicza\'nski Institute of Nuclear Physics, Polish Academy of Sciences, 
for hospitality during the summer 
of 2017 when the work reported in this paper was carried out.  
\def\cydot{\leavevmode\raise.4ex\hbox{.}}
  \def\cydot{\leavevmode\raise.4ex\hbox{.}}

\end{document}